\documentclass{article}
\setlength{\oddsidemargin}{1cm}
\setlength{\evensidemargin}{1cm}
\textheight 8.7 in
\textwidth 5.5 in

\usepackage{amsmath, amssymb, amsthm, braket, latexsym, mathtools}
\usepackage{subfiles,subcaption}
\usepackage{bm,array,arydshln}
\usepackage{graphicx}

\newtheorem{theorem}{Theorem}
 
\newtheorem{prop}{Proposition} 
\newtheorem{cor}{Corollary}
\newtheorem{rem}{Remark}
\newtheorem{definition}{Definition}
\newtheorem*{proof*}{Proof}

\begin{document}
\newpage
\clearpage
\title{{\bf QW-Search/Zeta Correspondence}
\vspace{15mm}} 

\author{
  Taisuke HOSAKA \\
  Graduate School of Environment and Information Sciences \\
  Yokohama National University \\
  Hodogaya, Yokohama, 240-8501, Japan \\
  e-mail: hosaka-taisuke-pn@ynu.jp \\
  \\ \\
  Norio KONNO \\
  Department of Mathematical Sciences \\
  College of Science and Engineering \\
  Ritsumeikan University \\
  1-1-1 Noji-higashi, Kusatsu, 525-8577, JAPAN \\
  e-mail: n-konno@fc.ritsumei.ac.jp \\
  \\ \\
  Etsuo SEGAWA \\
  Graduate School of Environment and Information Sciences \\
  Yokohama National University \\
  Hodogaya, Yokohama, 240-8501, Japan \\
  e-mail: segawa-etsuo-tb@ynu.ac.jp \\
  \\ \\
}

\date{\empty }

\maketitle

\vspace{80mm}
\noindent


\vspace{20mm}

\par\noindent

\clearpage

\begin{abstract}
   We consider the connection between this zeta function and quantum search via quantum walk.
   First, we give an explicit expression of the zeta function on the one-dimensional torus in the general case of 
   the number and position of marked vertices.
   Moreover, we deal with the two special cases of the position of the marked vertices on the $d$-dimensional torus $(d \ge 2)$.
   Additionally, we treat the property of the zeta function by using the Mahler measure.
   Our results show the relationship between the zeta function and quantum search algorithms for the first time.
\end{abstract}
  
\vspace{10mm}
  
\begin{small}
  \par\noindent
  {\bf Keywords}: Quantum walk, Zeta function, Quantum search.
\end{small}
\vspace{10mm}

\section{\bf Introduction \label{sec: intro}}
Quantum walk (QW) has been recently investigated as a counterpart of classical random walk (RW).
QW has interested properties compared to RW like linear spreading and localization, see \cite{IKS,KLS}.
By these features, QW is used as a tool for quantum search problems.
On the other hand, Komatsu et al. \cite{Walk/Zeta} introduced a new type of zeta functions for various walks including RW and QW on $T_{N}^{d}$ called the walk-type zeta function,
and bridged between the zeta function and a class of walks.
Here, $T_{N}^{d}$ denotes the $d$-dimensional torus with $N^{d}$ vertices.
Such a relationship is called ``Walk/Zeta Correspondence'' in \cite{Walk/Zeta}.
The zeta correspondence is studied for various models and found a relationship to different fields of mathematics and physics.
For example, Endo et al.\,\cite{EKKS} presented the zeta function based on bipartite walk \cite{CGSZ}.
Moreover, Komatsu et al.\,\cite{Mahler/Zeta} clarified that the walk-type zeta function is related to the Mahler measure, which appeared such as number theory and dynamical system \cite{M}.

In this paper, we consider the walk-type zeta function for quantum search.
We call such a relationship ``QW-search/Zeta Correspondence'' following "Walk/Zeta Correspondence".
For $T_{N}^{1}$, we get an expression of the walk-type zeta function in general case.
Furthermore, when the number of marked vertices is fixed to half,
we obtain representations for $T_{2N}^{d}$ in the following two cases:
Case 1 is that any marked vertex and non-marked vertex are adjacent.
Case 2 is that vertices in half of $T_{2N}^{d}$ are all marked vertices.
Additionally, we show the characteristics of the zeta function for quantum search, comparing to two cases of searching the marked vertices.
Our results connect the zeta function with quantum search algorithms based on QW for the first time.
To study quantum search algorithms by using zeta function may be useful for application to quantum information theory. 

The rest of this paper is as follows.
In Section 2, we explain the model of quantum search.
Section 3 presents the walk-type zeta function with respect to quantum search on $T_{N}^{1}$.
In particular, we consider a general case of the position of the marked vertices.
Section 4 treats the model on $T_{2N}^{d}$ with half marked vertices.
In Section 5, we compare the difference in the zeta function between quantum search and non-quantum search. 
Section 6 concludes our results.

\section{\bf The model of quantum search \label{sec2}}
Let $G=(X,E)$ be a connected and simple graph with $N$ vertices and $\epsilon$ edges.
Here $X$ is the set of vertices and $E$ is the set of edges.
Moreover, let $M_{X}\subset X$ be a set of the marked vertices with $|M_{X}|=m$.
The duplication $G'=(X \sqcup Y, E')$ of $G$ is as follows: 
$Y$ is the copy of $X$ (i.e., $Y=\{x' \,|\, x \in X\}$) and $"\sqcup"$ is the disjoint union.
$E'$ is denoted by
\begin{align*}
  \{x,y\} \in E \Leftrightarrow \{x,y'\}, \{x',y\} \in E'.
\end{align*}
Let $M \subset X \sqcup Y$ be a set of the marked vertices such that
\begin{align*}
  M=M_{X} \sqcup M_{Y},
\end{align*}
where $M_{Y}=\{x' \in Y \,|\, x \in X \}$.
By using the duplication $G'$, we define modified graph $G_{M}=(X \sqcup Y, E_{M})$ as follows:
The edges set $E_{M}$ is denoted by
\begin{align*}
  E_{M}=E' \cup E_{2},
\end{align*}
where
\begin{align*}
  E_{2}=\{(x,x')\,|\, x \in M_{X}, x' \in M_{Y}\}.
\end{align*}
Note that $|E_{M}|=2 \epsilon+m$.
Then we define $(2\epsilon+m) \times N$ matrices $K$ and $L$ as
\begin{align*}
  (K)_{e,x}=
  \left\{
    \begin{array}{ll}
      1/\sqrt{\mathrm{d}_{G}(x)} & : \mbox{$x \in e$ and $x \notin M_{X}$},\\
      1 & : \mbox{$x \in e$ and $e \in E_{2}$}, \\
      0 & : \mbox{otherwise},
    \end{array}
  \right.
\end{align*}

\begin{align*}
  (L)_{e,x'}=
  \left\{
    \begin{array}{ll}
      1/\sqrt{\mathrm{d}_{G}(x')} & : \mbox{$x' \in e$ and $x' \notin M_{Y}$},\\
      1 & : \mbox{$x' \in e$ and $e \in E_{2}$}, \\
      0 & : \mbox{otherwise},
    \end{array}
  \right.
\end{align*}
where $e \in E_{M}, x \in X, x' \in Y$ and $\mathrm{d}_{G}(x)$ is the degree of $x$ in the original graph $G$.
We should remark that once a walker steps in $M$, it can't escape from $M$ forever.
By using these matrices, the time evolution matrix on $G_{M}$ is defined by
\begin{align}
  \label{eq:time-evo}
  W'=(2LL^{\top}-I_{2\epsilon+m})(2KK^{\top}-I_{2\epsilon+m}).
\end{align}
Eq.\,(\ref{eq:time-evo}) can be interpreted that if a quantum walker on an edge such that either one of the endpoints is marked, then it is reflected with phase reversal $-1$;
if a quantum walker on an edge such that both of the endpoints are marked, then it stays the same edge with weight 1;
otherwise it transitions to a superposition on the neighboring edges following the Grover's matrix.
This is the time evolution of the quantum search driven by Grover walk proposed by \cite{S}.

On the other hand, similarly to \cite{Walk/Zeta},
we define a zeta function for a time evolution matrix $U$ on $T^{d}_{N}$ as follows:
\begin{definition}[\cite{Walk/Zeta}]
  The zeta function for a time evolution matrix $U$ on $T_{N}^{d}$ is defined by
  \begin{align*}
    \zeta(U, T_{N}^{d},u)=\mathrm{det}\left(I_{2dN^{d}+m}-uU\right)^{-1/N^{d}}.
  \end{align*}
\end{definition}
Our interest is an exponential expression for the above zeta function for $U=W'$,
in particular, $N \rightarrow \infty$. 
For the time evolution matrix $W'$, in order to prove our main results Theorem \ref{Thm: QW_Zeta} and \ref{Thm:zeta_d-dim}, we will use the following fact given in \cite{KSS}.
\begin{prop}{\rm (Konno, Sato and Segawa \cite{KSS})} \\
  \label{Prop:KSS}
  Let G be a connected graph with N vertices and $\epsilon$ edges.
  $W'$ is the time evolution matrix with search algorithm on $G$.
  Then we get
  \begin{align*}
    \mathrm{det}\left(I_{2N+m}-uW'\right)
    =(1-u)^{2(\epsilon-N)+3m} \mathrm{det}\left((1+u)^{2}I_{N-m}-4uP^{2}_{M}\right).
  \end{align*}
  Here, $P_{M}$ is an $(N-m) \times (N-m)$ matrix describing $RW$ with the Dirichlet boundary condition at $M$, that is,
\begin{align*}
  (P_{M})_{v,x}=
  \left\{
    \begin{array}{ll}
      1/\mathrm{d}_{G}(x) & : \mbox{$v$ and $x$ are adjacent}, \\
      0 & : \mathrm{otherwise}
    \end{array}
  \right.
\end{align*}
for $v,x \in M$.
\end{prop}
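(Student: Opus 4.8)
The plan is to collapse the $(2\epsilon+m)$-dimensional determinant $\mathrm{det}(I-uW')$ (the ambient space is the edge space of $E_{M}$, of dimension $|E_{M}|=2\epsilon+m$) onto the $N$-dimensional vertex space using the isometry structure of $K$ and $L$, and then to read off the answer from the block form of the discriminant $K^{\top}L$. First I would check that $K^{\top}K=L^{\top}L=I_{N}$. Each edge of $E_{M}=E'\cup E_{2}$ meets exactly one vertex of $X$ and one of $Y$, so distinct columns of $K$ (resp.\ $L$) have disjoint supports and the Gram matrix is diagonal; the weights then force every diagonal entry to be $1$, since a non-marked $x$ carries $\mathrm{d}_{G}(x)$ edges of weight $1/\sqrt{\mathrm{d}_{G}(x)}$ while a marked $x$ carries only its single $E_{2}$-edge of weight $1$. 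Hence $KK^{\top}$ and $LL^{\top}$ are rank-$N$ orthogonal projections and $2KK^{\top}-I,\ 2LL^{\top}-I$ are reflections.

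Next, writing $C=2KK^{\top}-I$ and $X=I+uC=(1-u)(I-KK^{\top})+(1+u)KK^{\top}$, one has $\mathrm{det}\,X=(1+u)^{N}(1-u)^{2\epsilon+m-N}$ and $X^{-1}=\tfrac{1}{1-u}(I-KK^{\top})+\tfrac{1}{1+u}KK^{\top}$. Since $I-uW'=X-2u\,L(L^{\top}C)$, the Weinstein--Aronszajn identity gives
\begin{align*}
  \mathrm{det}(I-uW')=\mathrm{det}(X)\,\mathrm{det}\!\left(I_{N}-2u\,L^{\top}C\,X^{-1}L\right).
\end{align*}
Using $K^{\top}K=L^{\top}L=I_{N}$ to evaluate $K^{\top}X^{-1}L=\tfrac{1}{1+u}T_{0}$ and $L^{\top}X^{-1}L$, where $T_{0}:=K^{\top}L$, a short computation yields $I_{N}-2u\,L^{\top}CX^{-1}L=(1-u^{2})^{-1}\big((1+u)^{2}I_{N}-4u\,T_{0}^{\top}T_{0}\big)$, and therefore
\begin{align*}
  \mathrm{det}(I-uW')=(1-u)^{2(\epsilon-N)+m}\,\mathrm{det}\!\left((1+u)^{2}I_{N}-4u\,T_{0}^{\top}T_{0}\right).
\end{align*}

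Finally I would compute $T_{0}=K^{\top}L$ entrywise. Ordering the vertices so that the marked ones come first, I expect $T_{0}=\mathrm{diag}(I_{m},\widetilde{A}_{M})$, where $\widetilde{A}_{M}$ is the symmetric normalized adjacency matrix on the non-marked vertices (the Dirichlet restriction at $M$). Since $\widetilde{A}_{M}$ is symmetric, $T_{0}^{\top}T_{0}=\mathrm{diag}(I_{m},\widetilde{A}_{M}^{2})$: the marked block contributes $\big((1+u)^{2}-4u\big)^{m}=(1-u)^{2m}$, while $\widetilde{A}_{M}=D^{1/2}P_{M}D^{-1/2}$ (with $D$ the diagonal of degrees) is similar to $P_{M}$, so its block equals $\mathrm{det}\big((1+u)^{2}I_{N-m}-4uP_{M}^{2}\big)$. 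Collecting the powers of $(1-u)$ produces the exponent $2(\epsilon-N)+3m$ and proves the claim.

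The hard part is the bookkeeping for $T_{0}$: one must track the three cases in the definitions of $K$ and $L$ across the two edge-classes $E'$ and $E_{2}$ to confirm that the marked--unmarked cross blocks cancel and that the marked block is exactly $I_{m}$. It is precisely this decoupling, together with the elementary identity $(1+u)^{2}-4u=(1-u)^{2}$, that manufactures the extra factor $(1-u)^{2m}$; the determinant reduction itself is routine once the orthonormality $K^{\top}K=L^{\top}L=I_{N}$ is in hand.
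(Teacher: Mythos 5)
The paper never proves this proposition: it is imported verbatim from Konno--Sato--Segawa \cite{KSS} as an external fact, so there is no internal proof to compare yours against. Your argument is, however, correct and self-contained, and it supplies exactly the Szegedy-type spectral-mapping derivation the paper leaves to the reference. I checked the three pivotal steps. First, $K^{\top}K=L^{\top}L=I_{N}$ holds for the reason you give: every edge of $E_{M}$ (whether in $E'$ or $E_{2}$) meets exactly one vertex of $X$ and one of $Y$, and the per-vertex weights are normalized (a non-marked vertex contributes $\mathrm{d}_{G}(x)$ entries of size $1/\sqrt{\mathrm{d}_{G}(x)}$, a marked one only its single $E_{2}$ entry of size $1$). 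Second, with $C=2KK^{\top}-I$ and $X=I+uC$ one indeed gets $CX^{-1}=-\tfrac{1}{1-u}\left(I-KK^{\top}\right)+\tfrac{1}{1+u}KK^{\top}$, hence $I_{N}-2uL^{\top}CX^{-1}L=\tfrac{1}{1-u^{2}}\bigl((1+u)^{2}I_{N}-4uT_{0}^{\top}T_{0}\bigr)$ and the prefactor $\det(X)\,(1-u^{2})^{-N}=(1-u)^{2(\epsilon-N)+m}$, as you state. Third, your case analysis for $T_{0}=K^{\top}L$ is right: a marked $x$ has $K_{e,x}=0$ on all of its $E'$ edges and weight $1$ on its unique $E_{2}$ edge, the mirror statement holds for $L$, so both marked--unmarked cross blocks vanish, $T_{0}=\mathrm{diag}(I_{m},\widetilde{A}_{M})$, and the marked block contributes $\bigl((1+u)^{2}-4u\bigr)^{m}=(1-u)^{2m}$, completing the exponent $2(\epsilon-N)+3m$.

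Two cosmetic points. With the paper's column-normalized convention $(P_{M})_{v,x}=1/\mathrm{d}_{G}(x)$, the similarity reads $\widetilde{A}_{M}=D^{-1/2}P_{M}D^{1/2}$ rather than your $D^{1/2}P_{M}D^{-1/2}$ (yours matches the row-normalized convention); since $\widetilde{A}_{M}$ is symmetric and similar to $P_{M}$ either way, the determinant identity is unaffected---and in the torus application all degrees equal $2d$, so $P_{M}$ is itself symmetric and the point is moot. Also, the Weinstein--Aronszajn step requires $X$ invertible, i.e.\ $u\neq\pm1$; both sides are polynomials in $u$, so the identity extends to all $u$, but this deserves one sentence. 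Finally, you silently (and correctly) read the paper's $I_{2N+m}$ as $I_{2\epsilon+m}$, which is the actual size of $W'$ for a general graph; the paper's subscript is accurate only in the case $\epsilon=N$ where it is later applied.
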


\section{\bf One-dimensional torus \label{sec: Zeta_on_cycle}}
In this section, we consider a zeta function on $G(X, E)=T_{N}^{1}$, that is,
\begin{align*}
  X&=\{x_{j}\,|\,j= 0, 1,..., N-1\, (\mathrm{mod}\,N)\}, \\
  E&=\{(x_{j},x_{j+1})\,|\, x_{j} \in V \, (\mathrm{mod}\,N)\}.
\end{align*}
Let $M$ be a set of marked vertices decomposed into
\begin{align*}
  M=\bigsqcup_{j=1}^{\ell} M_j,
\end{align*}
where $M_{j}$ forms an isolated point or a path whose length is more than 2 on $T^{1}_{N}$.
On the other hand, we will decompose the set of non-marked vertices $X \setminus M$ into the following subsets $F$ and $F'$, that is,
\begin{align*}
  M \setminus X= F \sqcup F'.
\end{align*}
Here, $F$ is expressed as 
\begin{align*}
  F=\bigsqcup_{j=1}^{r} F_j,
\end{align*}
where $F_{j}$ forms a path whose length is more than 2 on $T^{1}_{N}$;
$F'$ is denoted by
\begin{align*}
  F'=\{x \in X \setminus M \,|\, \text{every neighbor of}\, \, x \in M\}.
\end{align*}
We define the ratios of the above subsets in the number of vertices $M, F, F', F_{j}$ by $c_{M}=|M|/N, c_{F}=|F|/N, c_{F'}=|F'|/N$ and $c_{j}=|F_{j}|/N \quad (j=1,2,...,r)$.
Then we give an expression of the zeta function, which is our first main result.

\begin{theorem}
  \label{Thm: QW_Zeta}
  Let $T^{1}_{N}$ be the one-dimensional torus.
  Let $W'$ be the time evolution matrix with search algorithm on $T_{N}^{1}$.
  Then we obtain
  \begin{align}
    &\zeta(W',T^{1}_{N},u)^{-1} \notag \\
    &=\exp \left[3c_{M}\log(1-u)+2c_{F'}\log(1+u)+ \frac{1}{N} \sum_{j=1}^{r} \sum_{k=1}^{c_{j}N} \log\left\{1-2\cos\left(\frac{2k\pi}{c_{j}N+1}\right)u+u^{2}\right\}\right], \notag\\
    &\lim_{N \rightarrow \infty}\zeta(W',T^{1}_{N},u)^{-1} \notag \\ 
    \label{eq:1-dim}
    &=\exp \left[3c_{M}\log(1-u)+2c_{F'}\log(1+u)+c_{F} \int_{0}^{2\pi} \log\left(1-2\cos\theta \cdot u+u^{2}\right) \frac{d\theta}{2\pi} \right].
  \end{align}
\end{theorem}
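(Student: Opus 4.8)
The plan is to apply Proposition \ref{Prop:KSS} directly and then take the $(1/N)$-th power prescribed by the definition of $\zeta(W',T_N^1,u)$. Since $T_N^1$ is a cycle we have $\epsilon=N$, so the exponent $2(\epsilon-N)+3m$ in Proposition \ref{Prop:KSS} collapses to $3m$. Applying $\frac1N\log$ to the factorization and using $m/N=c_M$, the prefactor $(1-u)^{3m}$ contributes exactly the leading term $3c_M\log(1-u)$. Everything else must come from $\frac1N\log\det\bigl((1+u)^2 I_{N-m}-4uP_M^2\bigr)$, so the work is in analysing this determinant.

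The key structural observation is that $P_M$ is block diagonal along the connected components of $X\setminus M$. By the decomposition $X\setminus M=F\sqcup F'$, these components are precisely the paths $F_1,\dots,F_r$ comprising $F$, together with the isolated non-marked vertices forming $F'$. For a vertex of $F'$ both neighbours are marked, so its $1\times1$ block of $P_M$ vanishes and contributes the factor $(1+u)^2$ to the determinant; accumulating over the $c_{F'}N$ such vertices gives $(1+u)^{2c_{F'}N}$, i.e.\ the term $2c_{F'}\log(1+u)$ after applying $\frac1N\log$.

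The main computation is the contribution of a path $F_j$ on $c_jN$ vertices. Since every vertex of the cycle has degree $2$, the corresponding block of $P_M$ equals $\tfrac12$ times the adjacency matrix of the path $P_{c_jN}$, whose eigenvalues are the standard $\cos\bigl(k\pi/(c_jN+1)\bigr)$ for $k=1,\dots,c_jN$; the Dirichlet condition simply severs the edges joining the path endpoints to the neighbouring marked vertices. Hence the block determinant is $\prod_{k}\bigl[(1+u)^2-4u\cos^2(k\pi/(c_jN+1))\bigr]$, and the identity $(1+u)^2-4u\cos^2\theta=1-2u\cos(2\theta)+u^2$ rewrites each factor as $1-2u\cos\bigl(2k\pi/(c_jN+1)\bigr)+u^2$. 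Summing the logarithms over $j$ and $k$ and dividing by $N$ then produces the finite-$N$ formula of Theorem \ref{Thm: QW_Zeta}.

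For the limiting statement I would recognise $\frac1N\sum_{k=1}^{c_jN}\log\bigl(1-2u\cos(2k\pi/(c_jN+1))+u^2\bigr)$ as $c_j$ times a Riemann sum: the sample points $\theta_k=2k\pi/(c_jN+1)$ fill $[0,2\pi]$ with vanishing mesh (the $+1$ in the denominator being asymptotically negligible), so the sum converges to $c_j\int_0^{2\pi}\log(1-2u\cos\theta+u^2)\,\frac{d\theta}{2\pi}$. Summing over $j$ and using $\sum_j c_j=c_F$ collapses all path contributions into the single integral term. The step most in need of care is precisely this passage to the limit: because $1-2u\cos\theta+u^2=\lvert1-ue^{i\theta}\rvert^2$ develops a logarithmic singularity when $\lvert u\rvert=1$, the Riemann-sum convergence is immediate for $\lvert u\rvert\neq1$ but needs a separate estimate on the unit circle. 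In the formal-power-series regime $\lvert u\rvert<1$ relevant here the integrand is bounded, and the convergence is routine.
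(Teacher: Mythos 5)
Your proposal is correct and follows essentially the same route as the paper's proof: applying Proposition \ref{Prop:KSS} with $\epsilon=N$, decomposing $P_M$ into path blocks $\tfrac12 D_{|F_j|}$ and a zero block for $F'$, using the path spectrum $\cos\bigl(k\pi/(|F_j|+1)\bigr)$ together with the identity $(1+u)^2-4u\cos^2\theta=1-2u\cos(2\theta)+u^2$, and passing to the limit via a Riemann sum. Your extra remark on the convergence issue at $|u|=1$ is a point the paper silently glosses over, but otherwise the two arguments coincide.
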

\begin{rem}
  Theorem \ref{Thm: QW_Zeta} implies that the zeta function depends only on the ratios of the subsets $|M|,|F|$ and $|F'|$ in the limit of $N \rightarrow \infty$, which is independent of its configuration of marked and non-marked vertices.
\end{rem}
Now we will start the proof of Theorem \ref{Thm: QW_Zeta}.
\begin{proof}
  Let $D_{n}$ be $n \times n$ matrix given by
  \begin{align}
    \label{eq:d_n}
    D_{n}=
    \begin{pmatrix}
      0    & 1     & \dots & \dots & 0                                \\
      1  & 0           & \ddots & \text{\Huge 0} &   \vdots   \\
      \vdots   & \ddots      & \ddots & \ddots    &   \vdots   \\
      \vdots   & \text{\Huge 0}   & \ddots & 0    & 1  \\
      0   &      \dots       &    \dots    &  1      & 0
    \end{pmatrix}.
  \end{align}
  It follows from Eq.\,(\ref{eq:d_n}) and the definition of $P_{M}$ that
  \begin{align*}
    P_{M}=\frac{1}{2}\bigoplus_{j=1}^{r} D_{|F_{j}|} \oplus O_{|F'|},
  \end{align*}
  where $O_{|F'|}$ is the $|F'| \times |F'|$ zero matrix.
  On the other hand, we can easily calculate the spectra of $D_{n}$ as
  \begin{align}
    \label{eq: spec_A}
    {\rm Spec}\left(D_{n}\right)=\left\{2\cos{\left(\frac{k\pi}{n+1}\right)}\,|\,k=1,2,...,n\right\}
  \end{align}
  with multiplicity 1.
  Combining Proposition \ref{Prop:KSS} with Eq. (\ref{eq: spec_A}) implies
  \begin{align*}
    &\mathrm{det}\left(I_{2N+|M|}-uW'\right) \\
    &=(1-u)^{3|M|} \mathrm{det}\left((1+u)^{2}I_{N-|M|}-4uP_{M}^{2}\right) \\
    &=(1-u)^{3|M|} \prod_{\lambda \in \mathrm{Spec}(P_{M})} \left\{(1+u)^{2}-4u\lambda^{2}\right\} \\
    &=(1-u)^{3|M|}  \left\{(1+u)^{2}-4u\cdot 0\right\}^{|F'|}  \prod_{j=1}^{r} \prod_{k=1}^{|F_{j}|}\left\{(1+u)^{2}-4u \cdot \cos^{2}\left(\frac{k\pi}{|F_{j}|+1}\right)\right\} \\
    &=(1-u)^{3c_{M}N} (1+u)^{2c_{F'}N}\prod_{j=1}^{r} \prod_{k=1}^{c_{j}N} \left\{1-2\cos\left(\frac{2k\pi}{c_{j}N+1}\right) \cdot u+u^{2}\right\}.
  \end{align*}
  Therefore, we get
  \begin{align*}
    &\zeta(W',T^{1}_{N},u)^{-1} \\
    &=\left\{\mathrm{det}\left(I_{2N+|M|}-uW'\right)\right\}^{1/N} \\
    &=\exp\left[\frac{1}{N}\log\left\{\mathrm{det}\left(I_{2N+|M|}-uW'\right)\right\}\right] \\
    &=\exp\left[3c_{M}\log(1-u)+2c_{F'}\log(1+u) +\frac{1}{N} \sum_{j=1}^{r} \sum_{k=1}^{c_{j}N} \log\left\{1-2\cos\left(\frac{2k\pi}{c_{j}N+1}\right)u+u^{2}\right\}\right].
  \end{align*}
  Thus the first claim of Theorem \ref{Thm: QW_Zeta} is finished.
  By taking a limit as $N \rightarrow \infty$ for the third term in the exponential function of the above equation, we have
  \begin{align*}
    &\lim_{N \rightarrow \infty} \frac{1}{N} \sum_{j=1}^{r} \sum_{k=1}^{c_{j}N} \log\left(1-2\cos\left(\frac{2k\pi}{c_{j}N+1}\right) \cdot u+u^{2}\right) \\
    &=\sum_{j=1}^{r} \int_{0}^{c_{j}} \log\left(1-2\cos\left(\frac{2\pi x}{c_{j}}\right) \cdot u+u^{2}\right) dx \\
    &=\sum_{j=1}^{r} \int_{0}^{2\pi} \log\left(1-2\cos\theta \cdot u+u^{2}\right) \frac{c_{j}}{2\pi} d\theta \\
    &=c_{F} \int_{0}^{2\pi} \log\left(1-2\cos\theta \cdot u+u^{2}\right) \frac{d\theta}{2\pi}. 
  \end{align*}
  Hence the proof of Theorem \ref{Thm: QW_Zeta} is completed.
\end{proof}

\section{\bf $d$-dimensional torus}
From now on, we consider $T^{d}_{2N}$ with half marked vertices (i.e., $|M|=(2N)^{d}/2$).
In particular, the following two cases of the marked vertex positions are considered:
\begin{itemize}
  \item Case 1: (Marked and non-marked vertices are arranged in a checkerboard pattern)
  \begin{align*}
    M=\{(x_{1}, x_{2},...,x_{d}) \in T_{2N}^{d} \,|\, x_{1}+x_{2}+...+x_{d} \in \text{even} \}.
  \end{align*}
  \item Case 2: (The marked vertices are arranged in the half-area of $T_{2N}^{d}$)
  \begin{align*}
    M=\{(x_{1}, x_{2},...,x_{d}) \in T_{2N}^{d} \,|\, 0 \leq x_{d} \leq N \}.
  \end{align*}
\end{itemize}
We define the time evolution matrices of Cases 1 and 2 as $W'_{1}$ and $W'_{2}$, respectively.
Then we have the second main results.

\begin{theorem}
  \label{Thm:zeta_d-dim}
  Let $W'_{1}$ and $W'_{2}$ be the time evolution matrices for Case 1 and Case 2, respectively.
  Then we obtain the following equations.
  \begin{align}
    \label{eq:case_1_d-dim}
    &\zeta(W'_{1},T^{d}_{2N},u)^{-1}=\lim_{N \rightarrow \infty}\zeta(W'_{1},T^{d}_{2N},u)^{-1}=\exp\left[\left(2d-\frac{1}{2}\right)\log(1-u)+\log(1+u)\right], \\
    \label{eq:case_2}
    &\zeta(W'_{2},T^{d}_{2N},u)^{-1}=\exp\Bigg[\left(2d-\frac{1}{2}\right)\log(1-u) \Big. \notag \\
    &\Bigg. +\frac{1}{(2N)^{d}} \sum_{k_{1},...,k_{d-1}=1}^{2N} \sum_{k_{d}=1}^{N} \log\left\{(1+u)^{2}-\frac{4u}{d^{2}}\left(\sum_{i=1}^{d-1}\cos\left(\frac{(k_{i}-1)\pi}{N}\right)+\cos\left(\frac{k_{d}\pi}{N+1}\right)\right)^{2}\right\}\Bigg], \\
    \label{eq:case_2_d-dim}
    &\lim_{N \rightarrow \infty}\zeta(W'_{2},T^{d}_{2N},u)^{-1} 
    =\exp\left[\left(2d-\frac{1}{2}\right)\log(1-u) \right. \notag \\
    & \qquad +\left.\frac{1}{2} \int_{[0,2\pi)^{d}} \log\left\{(1+u)^{2}-\frac{4u}{d^{2}}\left(\sum_{i=1}^{d-1}\cos \theta_{i} +\cos \frac{\theta_{d}}{2}\right)^{2}\right\} d\Theta^{(d)}_{unif} \right],
  \end{align}
  where $\Theta^{(d)}=(\theta_{1},\theta_{2},...,\theta_{d}) \in [0,2\pi)^{d}$ and $d\Theta^{(d)}_{unif}$ is the uniform measure expressed as
  \begin{align*}
    d\Theta^{(d)}_{unif}= \frac{d\theta_{1}}{2\pi} \frac{d\theta_{2}}{2\pi} \cdot \cdot \cdot \frac{d\theta_{d}}{2\pi}.
  \end{align*}
\end{theorem}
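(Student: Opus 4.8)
The plan is to derive both cases from Proposition \ref{Prop:KSS} and reduce the problem to computing the spectrum of the Dirichlet random-walk matrix $P_M$. For $T^d_{2N}$ there are $(2N)^d$ vertices, $\epsilon = d(2N)^d$ edges, and $m=|M|=(2N)^d/2$, so the exponent of $(1-u)$ in Proposition \ref{Prop:KSS} is $2(\epsilon-(2N)^d)+3m=(2N)^d\,(2d-\tfrac12)$. Since $\zeta(W',T^d_{2N},u)^{-1}=\det(I-uW')^{1/(2N)^d}$, this factor contributes exactly $\exp[(2d-\tfrac12)\log(1-u)]$ in both cases, which is the first term of each formula. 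It then remains to evaluate $\det((1+u)^2 I_{(2N)^d/2}-4uP_M^2)$, and since $P_M$ is (up to the factor $1/(2d)=1/\mathrm{d}_G$) the adjacency matrix of the subgraph of $T^d_{2N}$ induced on the non-marked vertices $X\setminus M$, everything comes down to the combinatorics of that induced subgraph in each case.

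For Case 1 I would note that in the checkerboard colouring every edge of $T^d_{2N}$ joins a vertex of even coordinate-sum to one of odd coordinate-sum, so no two non-marked vertices are adjacent and the induced subgraph is edgeless. Hence $P_M=O$ identically for every $N$, giving $\det((1+u)^2 I_{(2N)^d/2}-4uP_M^2)=(1+u)^{(2N)^d}$, whose $(2N)^d$-th root is $\exp[\log(1+u)]$. Because $P_M$ vanishes exactly rather than only asymptotically, the finite-$N$ value and its $N\to\infty$ limit coincide, which is precisely Eq.~(\ref{eq:case_1_d-dim}).

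For Case 2 the non-marked vertices form a slab that is periodic in the first $d-1$ directions and, in the $x_d$-direction, an interval of $N$ consecutive layers bounded on both sides by marked vertices. The induced subgraph is therefore the Cartesian product $C_{2N}^{\square(d-1)}\,\square\,P_N$ of $d-1$ cycles with a path on $N$ vertices. Using that the adjacency matrix of a Cartesian product is $A\otimes I+I\otimes A$, the eigenvalues add, and combining $\mathrm{Spec}(C_{2N})=\{2\cos(\pi(k-1)/N)\}$ with the path spectrum $\mathrm{Spec}(P_N)=\{2\cos(\pi k/(N+1))\}$ (Eq.~(\ref{eq: spec_A}) with $n=N$) yields the eigenvalues $\lambda=\tfrac1d(\sum_{i=1}^{d-1}\cos(\pi(k_i-1)/N)+\cos(\pi k_d/(N+1)))$ of $P_M$. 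Writing $\det((1+u)^2 I-4uP_M^2)=\prod_\lambda\{(1+u)^2-4u\lambda^2\}$, taking logarithms and dividing by $(2N)^d$ then reproduces Eq.~(\ref{eq:case_2}) term by term.

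Finally I would pass to the limit by recognising the normalised sums as Riemann sums. Each of the $d-1$ periodic directions gives $\tfrac1{2N}\sum_{k_i=1}^{2N}\to\int_0^{2\pi}\tfrac{d\theta_i}{2\pi}$ with $\cos(\pi(k_i-1)/N)\to\cos\theta_i$. The main obstacle, and the only genuinely delicate point, is the $d$-th direction, where the sum runs only to $N$ while we still divide by $2N$: setting $\theta_d=2\pi k_d/(N+1)$ so that $\pi k_d/(N+1)\to\theta_d/2$, one finds $\tfrac1{2N}\sum_{k_d=1}^{N}\to\tfrac12\int_0^{2\pi}\tfrac{d\theta_d}{2\pi}$, which simultaneously produces the overall factor $\tfrac12$ and the argument $\cos(\theta_d/2)$ of Eq.~(\ref{eq:case_2_d-dim}). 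Assembling the factors over all $d$ directions gives $\tfrac12\int_{[0,2\pi)^d}\log\{\cdots\}\,d\Theta^{(d)}_{unif}$, completing the proof.
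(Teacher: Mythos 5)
Your proposal is correct and follows essentially the same route as the paper: both reduce to Proposition \ref{Prop:KSS}, observe that $P_M$ is identically zero in the checkerboard case (so the finite-$N$ and limiting expressions coincide), identify $P_M$ in Case 2 as $\tfrac{1}{2d}$ times the adjacency matrix of (torus in $d-1$ directions) $\square$ (path of length $N$) so that the cycle and path spectra add, and then pass to the limit via Riemann sums, with the factor $\tfrac12$ and the $\cos(\theta_d/2)$ arising exactly as you describe from summing $k_d$ only up to $N$ while normalising by $2N$. Your Cartesian-product phrasing is just the paper's tensor decomposition $P_M=\tfrac{1}{2d}\bigl(I_N\otimes A(T_{2N}^{d-1})+D_N\otimes I_{(2N)^{d-1}}\bigr)$ in different notation, so there is no substantive difference.
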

\begin{rem}
  We should remark that if $d=1$, ``Eq.\,(\ref{eq:1-dim}) with $(c_{M},c_{F},c_{F'})=(1/2,0,1/2)$'' and ``Eq.\,(\ref{eq:case_1_d-dim})'' are the same form.
In addition, ``Eq.\,(\ref{eq:1-dim}) with $(c_{M},c_{F},c_{F'})=(1/2,1/2,0)$'' and ``Eq.\,(\ref{eq:case_2_d-dim})'' are equal.
Therefore, we see that Eq.\,(\ref{eq:1-dim}) is the general case of the parameters $(c_{M},c_{F},c_{F'})$ for $d=1$.
\end{rem}
\begin{rem}
  Note that $\zeta(W'_{1},T^{d}_{2N},u)^{-1}$ does not depend on $N$.
  Then we see the first equality in Eq.\,(\ref{eq:case_1_d-dim}) holds, that is,
  \begin{align*}
    \zeta(W'_{1},T^{d}_{2N},u)^{-1}=\lim_{N \rightarrow \infty}\zeta(W'_{1},T^{d}_{2N},u)^{-1}.
  \end{align*}
\end{rem}
\begin{rem}
  The setting of Case 2, that is, cutting the half region of the torus, causes the additional terms of ``\,$\cos\left(k_{d}\pi/N+1\right)$'' and ``\,$\cos(\theta_{d}/2)$'' in Eqs.\,(\ref{eq:case_2}) and (\ref{eq:case_2_d-dim}), respectively, comparing with the case for all the non-marked vertices in $T_{2N}^{d}$, see (\ref{eq:zeta_Mahler_non-search}).
  Section 5 is devoted to the comparion in more detail.
\end{rem}
Now we will start the proof of Theorem \ref{Thm:zeta_d-dim}.
\begin{proof}
  In Case 1, $P_{M}$ becomes the zero matrix.
  Then we get
  \begin{align*}
    \zeta(W'_{1},T^{d}_{\infty},u)^{-1} 
    &=\left\{\mathrm{det}\left(I_{2d(2N)^{d}+|M|}-uW'_{1}\right)\right\}^{1/(2N)^{d}} \\
    &=\left\{(1-u)^{2(\epsilon-N)+3|M|}\cdot  \mathrm{det}\left((1+u)^{2}I_{(2N)^{d}-|M|}-4uP^{2}_{M}\right)\right\}^{1/(2N)^{d}} \\
    &=\left[(1-u)^{(2N)^{d}(2d-1/2)} \cdot \left\{(1-u)^{2}\right\}^{2^{d-1}N^{d}}\right]^{1/(2N)^{d}} \\
    &=(1-u)^{(2d-1/2)}(1+u).
  \end{align*}
  Thus, Eq.\,(\ref{eq:case_1_d-dim}) holds.
  In Case 2, let $A(T_{2N}^{d})$ be the $(2N)^{d} \times (2N)^{d}$ adjacency matrix of $T_{2N}^d$.
  Then the $2^{d-1}N^{d} \times 2^{d-1}N^{d}$ matrix $P_{M}$ is described by
  \begin{align}
    \label{eq:pM}
    P_{M}&= \frac{1}{2d}
    \begin{pmatrix}
      A(T_{2N}^{d-1}) & I_{(2N)^{d-1}} & \dots & \dots & O \\
      I_{(2N)^{d-1}} & A(T_{2N}^{d-1}) & \ddots & \text{\Huge O} & \vdots \\
      \vdots & \ddots & \ddots & \ddots  & \vdots\\
      \vdots & \text{\Huge O}   & \ddots & \ddots    & I_{(2N)^{d-1}}  \\
      O & \dots & & I_{(2N)^{d-1}} & A(T_{2N}^{d-1})
    \end{pmatrix} \notag \\
    &=\frac{1}{2d}\left(I_{N} \otimes A(T_{2N}^{d-1}) + D_{N} \otimes I_{(2N)^{d-1}}\right).
  \end{align}
  It is known that 
  \begin{align}
    \label{eq:spec_A_d}
    \mathrm{Spec}\left(A(T_{2N}^{d})\right)=\left\{2\sum_{j=1}^{d}\cos \left(\frac{(k_{j}-1)\pi}{N}\right)\,|\,k_{1},k_{2},...,k_{d}=1,2,...,2N\right\}.
  \end{align}
  By Eqs.\,(\ref{eq:pM}) and (\ref{eq:spec_A_d}), we see \cite{AW}
  \begin{align*}
    &\mathrm{Spec}(P_{M}) \\
    &=\left\{\frac{1}{d}\left(\sum_{j=1}^{d-1}\cos\left(\frac{(k_{j}-1)\pi}{N}\right)+\cos\left(\frac{k_{d}\pi}{N+1}\right)\right) \,|\, k_1,...,k_{d-1}=1,2,...,2N,\, k_{d}=1,2,...,N\right\}
  \end{align*}
  with multiplicity $1$.
  Combining Proposition \ref{Prop:KSS} with the spectra of $P_{M}$ gives
  \begin{align*}
    &\mathrm{det}\left(I_{(2N)^{d}+|M|}-uW'_{2}\right) \\
    &=(1-u)^{(2N)^{d}(2d-1/2)} \mathrm{det}\left((1+u)^{2}I_{2^{d-1}N^{d}}-4uP_{M}^{2}\right) \\
    &=(1-u)^{(2N)^{d}(2d-1/2)} \prod_{\lambda \in \mathrm{Spec}(P_{M})}\left\{(1+u)^{2}-4u\lambda^{2}\right\} \\
    &=(1-u)^{(2N)^{d}(2d-1/2)} \\
    & \qquad \times \prod_{k_{1},...,k_{d-1}=1}^{2N} \prod_{k_{d}=1}^{N} \left[(1+u)^{2}-\frac{4u}{d^{2}}\left\{\sum_{i=1}^{d-1}\cos\left(\frac{(k_{i}-1)\pi}{N}\right)+\cos\left(\frac{k_{d}\pi}{N+1}\right)\right\}^{2}\right]. 
  \end{align*}
  Therefore, we have
  \begin{align*}
    &\zeta(W'_{2},T^{d}_{2N},u)^{-1} \\
    &=\left[\mathrm{det} \left(I_{(2N)^{d}+|M|}-uW'_{2}\right)\right]^{1/(2N)^{d}} \\
    &=\exp \left[\frac{1}{(2N)^{d}}\log \left\{\mathrm{det}\left(I_{(2N)^{d}+|M|}-uW'_{2}\right) \right\}\right] \\
    &=\exp\Bigg[(2d-1/2)\log(1-u) \Big. \\
    &\quad \Bigg. +\frac{1}{(2N)^{d}} \sum_{k_{1},...,k_{d-1}=1}^{2N} \sum_{k_{d}=1}^{N} \log\left\{(1+u)^{2}-\frac{4u}{d^{2}}\left(\sum_{i=1}^{d-1}\cos\left(\frac{(k_{i}-1)\pi}{N}\right)+\cos\left(\frac{k_{d}\pi}{N+1}\right)\right)^{2}\right\}\Bigg].
  \end{align*}
  Thus, the proof of Eq.\,(\ref{eq:case_2}) is completed.
  For the second term in the exponential function of Eq.\,(\ref{eq:case_2}), we take a limit as $N \rightarrow \infty$.
  Then we see
  \begin{align*}
    &\lim_{N \rightarrow \infty} \frac{1}{(2N)^{d}} \sum_{k_{1},...,k_{d-1}=1}^{2N} \sum_{k_{d}=1}^{N} \log\left\{(1+u)^{2}-\frac{4u}{d^{2}}\left(\sum_{i=1}^{d-1}\cos\left(\frac{(k_{i}-1)\pi}{N}\right)+\cos\left(\frac{k_{d}\pi}{N+1}\right)\right)^{2}\right\} \\
    &=\frac{1}{2^{d}} \int_{(0,2\pi]^{d-1}} \int_{0}^{\pi} \log\left((1+u)^{2}-\frac{4u}{d^{2}}\left(\sum_{i=1}^{d-1}\cos\left(\pi x_{i}\right)+\cos\left(\pi x_{d}\right)\right)^{2}\right) dx_{1} \cdot \cdot \cdot dx_{d} \\
    &=\frac{1}{2} \int_{(0,2\pi]^{d}} \log\left((1+u)^{2}-\frac{4u}{d^{2}}\left(\sum_{i=1}^{d-1}\cos \theta_{i} +\cos \frac{\theta_{d}}{2}\right)^{2}\right) \frac{d\theta_1}{2\pi} \cdot \cdot \cdot \frac{d\theta_{d}}{2\pi}.
  \end{align*}
  Hence, we get the desired conclusion.
\end{proof}

\section{\bf Comparison between search and non-search cases}
In this section, we consider the effect from the existence of search algorithm.
Let $W$ be the time evolution matrix without searching the marked vertices.
Then Endo et al. \cite{EKKS} showed an expression of the zeta function as follows:
\begin{align}
  \label{eq:d-dim_non-mark}
  &\lim_{N \rightarrow \infty} \zeta(W, T^{d}_{2N}, u)^{-1} \notag \\
  &=\exp \left[d \log(1-u)+ \frac{1}{2} \int_{[0,2\pi)^{d}} \log\left((1+u)^{2}-\frac{4u}{d^{2}}\left(\sum_{j=1}^{d}\cos \theta_{j}\right)^{2}\right) d\Theta^{(d)}_{unif}\right].
\end{align}
Comparing Eq.\,(\ref{eq:case_2_d-dim}) with Eq.\,(\ref{eq:d-dim_non-mark}), wee see that the coefficient of the first term in the exponential function of Eq.\,(\ref{eq:case_2_d-dim}) is larger than that of Eq.\,(\ref{eq:d-dim_non-mark}).
Additionally, the integrand in the exponential function has a slight difference.
This difference is caused by the existence of the search algorithm.
In order to investigate the difference between two zeta functions, we consider the correspondence between the Mahler measure and the zeta function.
For details of the definition of the Mahler measure, refer to \cite{BZ,M,MS}.
Similarly to the previous work \cite{Mahler/Zeta}, we introduce the logarithmic zeta function for the time evolution matrix $U$ defined by
\begin{align*}
  \mathcal{L}(U,T^{d}_{\infty},u)=\log \left[\lim_{N \rightarrow \infty}\left\{ \zeta(U,T^{d}_{\infty}, u)^{-1}\right\} \right].
\end{align*}
Then $\mathcal{L}(U,T^{d}_{\infty},u)$ can be expressed in terms of the Mahler measure in the following way.
\begin{cor}
  \label{cor:Mahler}

  Let $f$ be a Laurent polynomial of $X_{1},X_{2},...,X_{d}$, and $m(f)$ be the Mahler measure of $f$. 
  Let $W$ and $W'_{2}$ both be a time evolution matrices, that is, $W$ does not search the marked vertices and $W'_{2}$ does.
  Then we obtain
  \begin{align}
    &\mathcal{L}(W,T^{d}_{\infty},u)=d\log(1-u)+\log\left(-\frac{\sqrt{u}}{d}\right) \notag \\
    \label{eq:zeta_Mahler_non-search}
    &\qquad \qquad +\frac{1}{2} \cdot  m\left(\left(\sum_{j=1}^{d}(X_{j}+X_{j}^{-1})\right)^{2}-d^{2}(u+u^{-1}+2)\right), \\
    \label{eq:zeta_Mahler_search}
    &\mathcal{L}(W'_{2}, T^{d}_{\infty},u)=(2d-1/2)\log(1-u)+\log\left(-\frac{\sqrt{u}}{d}\right) \notag \\ 
    &\qquad +\frac{1}{2}\cdot m\left(\left(\sum_{j=1}^{d-1}X_{j}+X_{j}^{-1} +\sqrt{X_{d}}+\sqrt{X_{d}}^{-1}\right)^{2}-d^{2}(u+u^{-1}+2)\right).
  \end{align}
\end{cor}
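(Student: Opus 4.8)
The plan is to read off $\mathcal{L}(W,T^{d}_{\infty},u)$ and $\mathcal{L}(W'_{2},T^{d}_{\infty},u)$ directly from the two limiting formulas already in hand, namely Eq.~(\ref{eq:d-dim_non-mark}) for $W$ and Eq.~(\ref{eq:case_2_d-dim}) for $W'_{2}$, by applying $\log$ to each, and then to recognize the surviving integral as a Mahler measure. The only input needed beyond those formulas is the substitution $X_{j}=e^{i\theta_{j}}$, under which $\cos\theta_{j}=\tfrac12(X_{j}+X_{j}^{-1})$ for each ordinary angle and $\cos(\theta_{d}/2)=\tfrac12(\sqrt{X_{d}}+\sqrt{X_{d}}^{-1})$ for the half-angle appearing in Case~2, with $\sqrt{X_{d}}:=e^{i\theta_{d}/2}$.

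First I would carry out the algebraic factorization of the integrand. Writing $s=\sum_{j=1}^{d}(X_{j}+X_{j}^{-1})$ in the non-search case and using $(1+u)^{2}/u=u+u^{-1}+2$, one gets
\[
(1+u)^{2}-\frac{4u}{d^{2}}\Big(\sum_{j=1}^{d}\cos\theta_{j}\Big)^{2}=-\frac{u}{d^{2}}\Big(s^{2}-d^{2}(u+u^{-1}+2)\Big),
\]
and the bracketed quantity is exactly the Laurent polynomial $f$ of Eq.~(\ref{eq:zeta_Mahler_non-search}). Repeating the same computation for $W'_{2}$ with $s$ replaced by $\sum_{j=1}^{d-1}(X_{j}+X_{j}^{-1})+\sqrt{X_{d}}+\sqrt{X_{d}}^{-1}$ yields the polynomial $g$ of Eq.~(\ref{eq:zeta_Mahler_search}). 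Taking $\log$ across the product then detaches a constant term $\tfrac12\log(-u/d^{2})=\log(-\sqrt{u}/d)$, leaves the $d\log(1-u)$ (resp.\ $(2d-\tfrac12)\log(1-u)$) term untouched, and turns the remaining $\tfrac12\int\log|\cdot|\,d\Theta^{(d)}_{unif}$ into $\tfrac12\,m(f)$ (resp.\ $\tfrac12\,m(g)$) by the definition of the Mahler measure. For $W$ every $\theta_{j}$ enters through $\cos\theta_{j}$, so this identification is immediate.

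The step that needs genuine care is the last identification in the search case, since $g$ is a Laurent polynomial in $\sqrt{X_{d}}$ rather than $X_{d}$ and $\theta_{d}$ enters only through $\cos(\theta_{d}/2)$. Here I would substitute $\phi=\theta_{d}/2$, so that $\phi$ sweeps $[0,\pi)$ as $\theta_{d}$ sweeps $[0,2\pi)$ and $\tfrac{d\theta_{d}}{2\pi}=\tfrac{d\phi}{\pi}$, and then exploit that $g$ depends on $\phi$ solely through $\cos\phi$, which is symmetric under $\phi\mapsto 2\pi-\phi$. Consequently the half-range integral $\int_{0}^{\pi}\log|g|\,\tfrac{d\phi}{\pi}$ coincides with the full Mahler-measure integral $\int_{0}^{2\pi}\log|g|\,\tfrac{d\psi}{2\pi}$ in the variable $\psi$ with $\sqrt{X_{d}}=e^{i\psi}$; thus the Case~2 integral over $[0,2\pi)^{d}$ is precisely $m(g)$, and the prefactor $\tfrac12$ delivers $\tfrac12\,m(g)$. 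Assembling the constant term, the preserved $\log(1-u)$ term, and the Mahler measure then gives Eqs.~(\ref{eq:zeta_Mahler_non-search}) and (\ref{eq:zeta_Mahler_search}). The branch bookkeeping in writing $\tfrac12\log(-u/d^{2})$ as $\log(-\sqrt{u}/d)$ is a matter of convention rather than a real obstacle, so the half-angle symmetry argument is the one substantive point.
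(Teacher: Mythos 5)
Your proposal is correct and follows the same overall route as the paper: read off $\mathcal{L}(W,T^{d}_{\infty},u)$ and $\mathcal{L}(W'_{2},T^{d}_{\infty},u)$ from Eqs.~(\ref{eq:d-dim_non-mark}) and (\ref{eq:case_2_d-dim}), factor $-u/d^{2}$ out of the integrand via $(1+u)^{2}/u=u+u^{-1}+2$, split off the constant $\tfrac12\log(-u/d^{2})=\log(-\sqrt{u}/d)$, and identify the surviving integral as a Mahler measure under $X_{j}=e^{i\theta_{j}}$. The one place you genuinely diverge --- and in fact do better than the paper --- is the half-angle term in the search case. The paper simply substitutes $2\cos(\theta_{d}/2)=e^{i\theta_{d}/2}+e^{-i\theta_{d}/2}$, writes the resulting integral as $m(\cdot)$ with $\sqrt{X_{d}}+\sqrt{X_{d}}^{-1}$, and then remarks (after the corollary) that this is only a \emph{formal} expression, since the Mahler measure is defined for Laurent polynomials. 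Your change of variables $\psi=\theta_{d}/2$, combined with the symmetry $\cos(2\pi-\psi)=\cos\psi$, shows that the integral over $\theta_{d}\in[0,2\pi)$ with density $d\theta_{d}/2\pi$ equals the full-circle integral in $\psi$ with density $d\psi/2\pi$, so that Eq.~(\ref{eq:zeta_Mahler_search}) is a \emph{genuine} Mahler measure of a Laurent polynomial in the variables $X_{1},\ldots,X_{d-1}$ and $Y:=\sqrt{X_{d}}$. This buys a rigorous interpretation that the paper forgoes, at the cost of introducing a new variable; the paper's version keeps the uniform measure $d\Theta^{(d)}_{unif}$ and the original variables throughout but must flag the expression as formal. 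Otherwise the two computations coincide, including the loose branch bookkeeping in $\log(-\sqrt{u}/d)$, which both you and the paper treat as a convention.
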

\begin{proof}
  As for $W$ case, by Eq.\,(\ref{eq:d-dim_non-mark}), we get
  \begin{align*}
    &\mathcal{L}(W,T^{d}_{2N},u) \\
    &=d\log(1-u)+\frac{1}{2} \int_{[0,2\pi)^{d}} \log\left((1+u)^{2}-\frac{4u}{d^{2}}\left(\sum_{j=1}^{d}\cos \theta_{j}\right)^{2}\right) d\Theta^{(d)}_{unif} \\
    &=d\log(1-u)+\frac{1}{2}\int_{(0,2\pi]^{d}}\log\left((1+u)^{2}-\frac{u}{d^{2}}\left(\sum_{j=1}^{d} e^{i \theta_{j}}+e^{-i \theta_{j}}\right)^{2}\right) d\Theta^{(d)}_{unif} \\
    &=d\log(1-u)+\frac{1}{2}\log\left(-\frac{u}{d^{2}}\right) \\
    &+\frac{1}{2}\int_{(0,2\pi]^{d}} \log\left(\left(\sum_{j=1}^{d} e^{i \theta_{j}}+e^{-i \theta_{j}}\right)^{2}-d^{2}(u+u^{-1}+2)\right) d\Theta^{(d)}_{unif} \\
    &=d\log(1-u)+\log\left(-\frac{\sqrt{u}}{d}\right)+\frac{1}{2} \cdot m\left(\left(\sum_{j=1}^{d}(X+X^{-1})\right)^{2}-d^{2}(u+u^{-1}+2)\right).
  \end{align*}
  Thus, Eq.\,(\ref{eq:zeta_Mahler_non-search}) holds.
  On the other hand, concerning $W'_{2}$ case, it follows from Eq.\,(\ref{eq:case_2_d-dim}) that
  \begin{align*}
    &\mathcal{L}(W'_{2},T^{d}_{\infty},u) \\
    &=\frac{1}{2}(4d-1)\log(1-u)+\frac{1}{2} \int_{[0,2\pi)^{d}} \log\left((1+u)^{2}-\frac{4u}{d^{2}}\left(\sum_{i=1}^{d-1}\cos \theta_{i} +\cos \frac{\theta_{d}}{2}\right)^{2}\right) d\Theta^{(d)}_{unif} \\
    &=(2d-1/2)\log(1-u)+\log\left(-\frac{\sqrt{u}}{d}\right) \\ &+\frac{1}{2}\int_{(0,2\pi]^{d}} \log\left(\left(\sum_{j=1}^{d-1} e^{i \theta_{j}}+e^{-i \theta_{j}} +e^{i \theta_{d}/2}+e^{-i \theta_{d}/2}\right)^{2}-d^{2}(u+u^{-1}+2)\right) d\Theta^{(d)}_{unif} \\
    &=(2d-1/2)\log(1-u)+\log\left(-\frac{\sqrt{u}}{d}\right) \\ &+\frac{1}{2}\cdot m\left(\left(\sum_{j=1}^{d-1}X_{j}+X_{j}^{-1} +\sqrt{X_{d}}+\sqrt{X_{d}}^{-1}\right)^{2}-d^{2}(u+u^{-1}+2)\right).
  \end{align*}
  Therefore, a proof of Eq.\,(\ref{eq:zeta_Mahler_search}) is completed.
\end{proof}

Corollary \ref{cor:Mahler} implies that the difference between search and non-search cases is clarified by using the Mahler measure.
Note that $\sqrt{X_{d}}+\sqrt{X_{d}}^{-1}$ in Eq.\,(\ref{eq:zeta_Mahler_search}) is a formal expression, because the Mahler measures is a measure with respect to a Laurent polynomial.
Figure 1 shows the difference between $\mathcal{L}(T^{d}_{\infty},u)$ and $\mathcal{L}(W'_{2},T^{d}_{\infty},u)$ for $u \in (0,1)$.
When $u$ is close to 0, both graph are almost equal. However, the greater the value of $u$, the greater the difference between two. 

\begin{figure}[htbp]
  \begin{center}
    \label{fig:plot}
    \includegraphics[scale=.7]{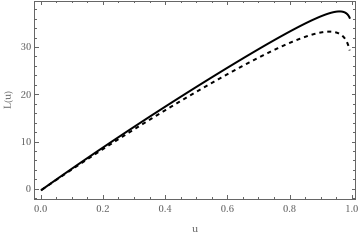}
    \caption{The solid and dot curves correspond to $\mathcal{L}(T^{d}_{\infty},u)$ and $\mathcal{L}(W'_{2},T^{d}_{\infty},u)$, respectively.}
  \end{center}
\end{figure}

\section{\bf Conclusion}
The present paper proposed a new relationship between the walk-type zeta function and quantum search based on QW for the $d$-dimensional torus $T_{N}^{d}$.
In particular, for $T_{N}^{1}$, we considered the general case for the number and position of the marked vertices.
Moreover, we treated the special two cases of the position of the marked vertices for $T_{2N}^{d}$ with the half marked vertices.
Additionally, we discussed the effect of the quantum search algorithm on the zeta function by using the Mahler measure.
One of the future problems is to get an explicit expression for the general case on $T_{N}^{d}$.
To clarify the relation between continuous time model of the quantum search algorithms and the zeta function is another interesting future problem.

\section*{Conflict of interest}
On behalf of all authors, the corresponding author states that there is no conflict of interest.

\section*{Data Availability}
Our manuscript has no associated data.

\end{document}